\documentclass[letterpaper, 10 pt, conference]{ieeeconf}  

\IEEEoverridecommandlockouts                              

\title{\LARGE \bf
Synchronous Observer Design for Landmark-Inertial SLAM with Magnetometer and Intermittent GNSS Measurements
}

\author{Arkadeep Saha$^{1}$, Pieter van Goor$^{2}$ and Ravi Banavar$^{1}$
\thanks{$^{1}$Centre for Systems and Control, Indian Institute of Technology Bombay, Powai, Mumbai-400076, India.
        {\tt\small 22b1270@iitb.ac.in, banavar@iitb.ac.in}}%
\thanks{$^{2}$School of Aerospace, Mechanical and Mechatronic Engineering (AMME), Faculty of Engineering, University of Sydney, NSW, 2006, Australia.
        {\tt\small  pieter.vangoor@sydney.edu.au}}%
}

\newcommand{\evec}{\mathbf{e}_3}

\newcommand{\SO}{\mathbf{SO}(3)}
\newcommand{\soalg}{\mathfrak{so}(3)}

\newcommand{\SEn}{\mathbf{SE}_{n+2}(3)}
\newcommand{\senalg}{\mathfrak{se}_{n+2}(3)}

\newcommand{\SIMn}{\mathbf{SIM}_{n+2}(3)}
\newcommand{\simnalg}{\mathfrak{sim}_{n+2}(3)}
\newcommand{\diag}{\text{diag}}
\newcommand{\tr}{\text{tr}}
\newcommand{\cset}[2]{\left\{ #1 \; \middle| \; #2 \right\}}

\usepackage{graphicx}          
\usepackage{amsmath}
\usepackage{amssymb}
\usepackage{amsfonts}
\usepackage{tikz-cd}
\usepackage{color}

\newtheorem{lemma}{Lemma}
\newtheorem{thm}{Theorem}
\newtheorem{definition}{Definition}

\begin{document}

\maketitle
\thispagestyle{empty}
\pagestyle{empty}

\begin{abstract}

In Landmark-Inertial Simultaneous Localisation and Mapping (LI-SLAM), the positions of landmarks in the environment and the robot's pose relative to these landmarks are estimated using landmark position measurements, and measurements from the Inertial Measurement Unit (IMU).
However, the robot and landmark positions in the inertial frame, and the yaw of the robot, are not observable in LI-SLAM.
This paper proposes a nonlinear observer for LI-SLAM that overcomes the observability constraints with the addition of intermittent GNSS position and magnetometer measurements.
The full-state error dynamics of the proposed observer is shown to be both almost-globally asymptotically stable and locally exponentially stable, and this is validated using simulations.

\end{abstract}

\section{INTRODUCTION}

Simultaneous Localisation and Mapping (SLAM) is the problem of concurrently estimating  a robot's pose and a map of its environment.
It has been an active area of research in mobile robotics since last thirty years \cite{bailey_slam:part2}.
The two standard approaches in SLAM are the extended Kalman filter (EKF) \cite{bailey_slam:part2} and graph-based nonlinear optimization, \cite{kaess2012isam2}, both of which offer only local convergence, but recent research in the nonlinear observer community has generated novel solutions \cite{van2021constructive, boughellaba2025nonlinear} that guarantee (almost-)global stability.

LI-SLAM is the version of the SLAM problem that uses measurements of angular velocity and acceleration from an IMU alongside complementary exteroceptive landmark position measurements provided by a stereo camera, or an RGB-D camera, or a lidar.
This problem has been studied by the nonlinear observer community using a variety of Kalman filters.
A globally asymptotically stable Kalman filter for LI-SLAM expressed in the body-frame of the robot, and a Procrustes problem to estimate the robot's pose in the inertial frame was proposed in \cite{lourencco2016simultaneous}.
The LI-SLAM problem was addressed by \cite{johansen2016globally} using an attitude heading reference system (AHRS) to estimate the robot's attitude, and a linear time-varying Kalman filter for the remaining mapping and position estimation problem.
The invariant EKF (IEKF) for landmark SLAM (where the robot's body-frame velocity measurement is assumed to be available) was introduced in \cite{barrau2015ekf} using a novel Lie group $\mathbf{SE}_{n+1}(3)$.
Each of these approaches face limitations with all of them lacking global stability guarantee and requiring at least a quadratically scaling computational complexity associated with the Kalman filter.
Deterministic nonlinear observers for SLAM follow from a rich history of geometric observers for attitude estimation \cite{mahony2008nonlinear} and pose estimation \cite{baldwin2009nonlinear} using Lie groups.
In \cite{mahony2017geometric}, a geometric observer for the kinematic landmark SLAM was designed by introducing the $\mathbf{SLAM}_n(3)$ Lie group and defining a quotient manifold structure to encode the invariance of SLAM to changes in the inertial reference frame that led to inconsistency issues in classic approaches.
A nonlinear observer on the $\mathbf{SE}_{n+3}(3)$ Lie group for LI-SLAM with almost-global asymptotic stability using gravity direction as an additional auxiliary state in the observer has been proposed in \cite{boughellaba2025nonlinear}.
A synchronous observer design for LI-SLAM was proposed in \cite{saha2025synchronous} inspired by the recent developments in synchronous observers for velocity-aided attitude (VAA) in \cite{van2023vaa}, and inertial navigation systems in \cite{VANGOOR2025112328}. 
This work uniquely identified the observable states to a base space and proved the stability properties in the base space.

It is well-studied that the problem of LI-SLAM is not fully observable. 
The observable states in LI-SLAM are: roll and pitch of the robot's attitude, body-frame velocity of the robot, and the relative positions of the landmarks in the robot's frame \cite{martinelli2013observability}.
Thus, the yaw and the positions of the robot and landmarks in any inertial frame are not observable.
But for many practical applications, such as outdoor navigation in urban environments, it is necessary to estimate the full attitude and positions with respect to the desired inertial frame.
Additional measurements, such as GNSS position and magnetometer readings can be used to overcome this observability constraint of the LI-SLAM.
The GNSS position measurement has been incorporated in SLAM by both EKF \cite{kim20056dof} and nonlinear optimization-based approaches \cite{boche2025okvis2}.

This paper proposes a nonlinear observer for LI-SLAM aided with additional measurements from the magnetometer and GNSS using the principles of synchronous observer design.
The intermittency in the GNSS measurements has been incorporated, and the constraints 
arising out of the duration of its availability, have been accounted for in the convergence of the proposed observer.
Although the GNSS position measurement is sufficient for full observability of LI-SLAM, additional magnetometer measurements are used for faster convergence of yaw.

This paper consists of four sections alongside the introduction and the conclusion. 
Section \ref{sec:preliminaries} introduces the mathematical preliminaries and notations used in the paper. 
Section \ref{sec:prob_description} provides the description and Lie group interpretation of the LI-SLAM problem. 
In Section \ref{sec:obs_design}, we provide the observer architecture, the design of correction terms, and the proofs of stability and convergence. 
The simulation results are provided in Section \ref{sec:simulations}, verifying the theory developed throughout the paper. 

\section{Preliminaries}\label{sec:preliminaries}

For all matrices $A,B \in \mathbb{R}^{n\times m}$, the Frobenius/Euclidean inner product and norm are defined by 
\begin{align*}
    \langle A,B\rangle = \text{tr}(A^\top B), && |A|^2 = \text{tr} (A^\top A),
\end{align*}
respectively. 
For a positive definite matrix $P\in \mathbb{R}^{m\times m}$ and any matrix $A\in \mathbb{R}^{n\times m}$, define the weighted norm
\begin{align*}
    |A|_P^2 = \langle A, AP\rangle = \text{tr}(APA^\top).
\end{align*}
The 2-sphere is defined by $\mathbf{S}^2 = \cset{y \in \mathbb{R}^3}{|y| = 1}$.
$\mathbf{1}_n\in \mathbb{R}^n$ and $\mathbf{0}_n\in \mathbb{R}^n$ are column vectors with all 1's and 0's, respectively. \

\begin{definition}[Temporally Persistently Exciting] \label{def:TPE}
A switching signal $\sigma : [0,\infty) \to \{0,1\}$ is said to be \emph{temporally persistently exciting} (TPE) if
there exist constants $T > \tau > 0$ such that, for every $t > 0$ there exist an interval $[t_1,t_2) \subset [t,t+T)$ with $t_2 - t_1 \geq \tau$ and $\sigma(s) = 1$ for all $s \in  [t_1,t_2)$.
\end{definition}

Thus, a signal $\sigma : [0,\infty) \to \{0,1\}$ is TPE if, within every interval of length $T$, $\sigma$ is constantly equal to 1 over a subinterval of length at least $\tau$. 
This is key to our proof of convergence in Theorem~\ref{thm:main_result}.

\subsection{Lie groups}

For an introduction to Lie groups and smooth manifolds, the authors recommend \cite{2012_lee_IntroductionSmoothManifolds}.
The special orthogonal group $\SO$ is the Lie group of 3D rotations, defined as
\begin{align*}
    \SO := \cset{R\in \mathbb{R}^{3\times 3}}{R^\top R = I_3, \det(R) = 1}.
\end{align*}
For any vector $\Omega \in \mathbb{R}^3$, define $\Omega^\times \in \mathbb{R}^{3\times 3}$ as
\begin{align*}
    \Omega^\times = \begin{pmatrix}
        0 & -\Omega_3 & \Omega_2 \\
        \Omega_3 & 0 & -\Omega_1 \\
        -\Omega_2 & \Omega_1 & 0
    \end{pmatrix}.
\end{align*}
Then $\Omega^\times v = \Omega\times v$ for any vector $v \in \mathbb{R}^3$, where $\times$ is the usual cross product.
The Lie algebra of $\SO$ is defined as 
\begin{align*}
    \soalg := \cset{\Omega^\times \in \mathbb{R}^{3 \times 3}}{\Omega \in \mathbb{R}^3}.
\end{align*}
The extended special Euclidean group $\mathbf{SE}_n(3)$ and its Lie algebra $\mathfrak{se}_n(3)$ are defined by \cite{barrau2015ekf}
\begin{align*}
    \mathbf{SE}_n(3) &:=\left\{ \begin{pmatrix}
        R & V\\
        0_{n\times3} & I_n
    \end{pmatrix} \middle|\ R \in \mathbf{SO}(3),\ V\in \mathbb{R}^{3\times n} \right\},\\
    \mathfrak{se}_n(3) &:=\left\{ \begin{pmatrix}
        \Omega^\times & W\\
        0_{n\times3} & 0_{n\times n}
    \end{pmatrix} \middle|\ \Omega \in \mathbb{R}^3,\ W\in \mathbb{R}^{3\times n} \right\}.
\end{align*}
An element of $\mathbf{SE}_n(3)$ may be denoted $X = (R, V)$ for convenience, where $R \in \mathbf{SO}(3)$ and $V\in \mathbb{R}^{3\times n}$. Likewise, an element of $\mathfrak{se}_n(3)$ can be denoted by $\Delta = (\Omega_\Delta, W_\Delta)$, where $\Omega_\Delta \in \mathbb{R}^3$ and $W_\Delta \in \mathbb{R}^{3\times n}$. The matrix Lie group $\mathbf{SIM}_n(3)$ and its Lie algebra $\mathfrak{sim}_n(3)$ are defined as 
\begin{align*}
    &\mathbf{SIM}_n(3) := \\ &\left\{ \begin{pmatrix}
        R & V\\
        0_{n\times3} & A
    \end{pmatrix} \Bigg|\ R \in \mathbf{SO}(3),\ V\in \mathbb{R}^{3\times n}, A\in \mathbf{GL}(n) \right\},\\
    &\mathfrak{sim}_n(3) :=\\ &\left\{ \begin{pmatrix}
        \Omega^\times & W\\
        0_{n\times3} & S
    \end{pmatrix} \Bigg|\ \Omega \in \mathbb{R}^3,\ W\in \mathbb{R}^{3\times n}, \ S \in \mathfrak{gl}(n) \right\},
\end{align*}
where $\mathbf{GL}(n)$ is the set of $n\times n$ invertible matrices.
An element of $\mathbf{SIM}_n(3)$ can be denoted $Z = (R_Z, V_Z, A_Z)$ for convenience, where $R_Z\in \mathbf{SO}(3)$, $V_Z\in \mathbb{R}^{3\times n}$ and $A_Z \in \mathbf{GL}(n)$.
Likewise, an element of $\mathfrak{se}_n(3)$ can be denoted by $\Gamma = (\Omega_\Gamma, W_\Gamma, S_\Gamma)$, where $\Omega_\Gamma \in \mathbb{R}^3$, $W_\Gamma \in \mathbb{R}^{3\times n}$ and $S_\Gamma \in \mathfrak{gl}(n)$.

\section{Problem Description}\label{sec:prob_description}

\subsection{LI-SLAM Dynamics and Measurements}

Consider a mobile robot equipped with an IMU, a \mbox{3-D} camera system, a GNSS receiver, and a magnetometer moving in an environment with $n$ landmarks.
Let $\{0\}$ be the inertial frame as defined by the GNSS receiver.
Let $\{B\}$ be the body-fixed frame of the robot. 
We assume that the frames of the IMU , the camera system and the magnetometer are coincident and align with $\{B\}$, for simplicity. 
Define $R \in \SO$ as the orientation of $\{B\}$ with respect to $\{0\}$ expressed in $\{0\}$.
Define $v \in \mathbb{R}^3$ and $x \in \mathbb{R}^3$ as the velocity and position of the robot in the inertial frame $\{0\}$. 
Define the positions of the $n$ landmarks in the inertial frame $\{0\}$ as $p_i \in \mathbb{R}^3$ for $ i = 1, \cdots, n$.

The angular velocity and proper acceleration of the robot in the body frame obtained from the IMU are denoted by $\Omega \in \mathbb{R}^3$ and $a \in \mathbb{R}^3$.
The \emph{LI-SLAM state} is given by $(R,v,x,p_1, \cdots, p_n)$ and the \emph{LI-SLAM inputs} are $(\Omega, a)$. 
Then, the dynamics of the LI-SLAM system are
\begin{align}\label{eq:dyn}
   \dot{R} = R\Omega^\times, \quad
   \dot{v} = Ra + g\mathbf{e}_3, \quad
   \dot{x}  = v, \quad
   \dot{p}_i  = 0,
\end{align}
where $i = 1, \cdots,n$. 
The measurements from the 3-D camera system are the landmark positions relative to $\{B\}$.
Thus, the measurement of landmark $i$ is given by
\begin{align}\label{eq:landmark_meas_model}
    y_i = R^\top(p_i- x) \in \mathbb{R}^3.
\end{align}
The magnetometer gives the magnetic field direction measured in the body frame, that is
\begin{align}\label{eq:compass_meas_model}
    y_m = R^\top \mathring{y}_m \in \mathbf{S}^2,
\end{align}
where $\mathring{y}_m\in \mathbf{S}^2$ is the constant reference magnetic field direction.
The GNSS receiver provides measurements of the robot position in the inertial frame $\{0\}$.
However, we consider that GNSS measurements may not always be available, so the measurement model for the GNSS position is
\begin{align}\label{eq:gnss_meas_model}
    y_x = \sigma x \in \mathbb{R}^3, 
\end{align}
where signal $\sigma(t) \in \{0,1\}$ represents the availability of GNSS measurements, that is, $\sigma(t) = 1$ when measurements are available; otherwise $\sigma(t) = 0$. 
We assume that $\sigma$ is TPE according to definition \ref{def:TPE}, with time constants $T>\tau>0$.

\subsection{Lie Group Interpretation}
The LI-SLAM state can be written using $\mathbf{SE}_{n+2}(3)$. 
Define the state $X$ by
\begin{equation}\label{eq:Liegrp}
\begin{aligned}
    X &= \begin{pmatrix}
        R & V \\
        0_{(n+2) \times 3} & I_{n+2}
    \end{pmatrix} \in \SEn, \\
    V &=\begin{pmatrix}
        v & x & p_1 & \cdots & p_n
    \end{pmatrix} \in \mathbb{R}^{3 \times (n+2)},
\end{aligned}
\end{equation}
where $V$ is the translational sub-matrix containing the velocity and position of the robot and the landmark positions.
Following \cite{saha2025synchronous}, the system dynamics \eqref{eq:dyn} can be rewritten as dynamics on the Lie group as
\begin{align}\label{eq:lie_grp_dynamics}
    \dot{X} &= XU + GX + NX - XN, \\
    U & = \begin{pmatrix}
        \Omega^\times & W_U\\
        0_{(n+3)\times 3} & 0_{(n+2)\times (n+2)}
    \end{pmatrix},\ W_U = \begin{pmatrix}
        a & 0_{3 \times (n+1)}
    \end{pmatrix}, \notag \\
    G &= \begin{pmatrix}
        0_{3 \times 3} & W_G\\
        0_{(n+2)\times 3} & 0_{(n+2)\times (n+2)}
    \end{pmatrix},\ W_G = \begin{pmatrix}
        g\evec & 0_{3 \times (n+1)}
    \end{pmatrix},\notag \\
    N &= \begin{pmatrix}
        0_{3 \times 3} & 0_{3\times (n+2)}\\
        0_{(n+2)\times 3} & S_N
    \end{pmatrix},\ S_N = \begin{pmatrix}
        0 & -1 & \mathbf{0}_n^\top \\
        0 & 0 & \mathbf{0}_n^\top \\
        \mathbf{0}_n & \mathbf{0}_n & 0_{n\times n}
    \end{pmatrix}.\notag
\end{align}
Here, $U \in \senalg$ is the matrix of the IMU inputs, $G \in \senalg$ is the constant matrix for the gravity vector, and $N \in \simnalg$ is the constant matrix that represents the velocity dynamics of the robot position.
The system dynamics are similar to the dynamics of inertial navigation system studied in \cite{VANGOOR2025112328}.

The measurement model for the landmarks \eqref{eq:landmark_meas_model} can also be written with Lie group notation as 
\begin{align}\label{eq:landmark_meas_matrix}
    Y_p &= \begin{pmatrix}
        y_1 & \cdots & y_n
    \end{pmatrix} \in \mathbb{R}^{3 \times n}, \\
    \begin{pmatrix}
        Y_p \\ C
    \end{pmatrix} &= X^{-1}\begin{pmatrix}
        0_{3\times n} \\ C
    \end{pmatrix}= \begin{pmatrix}
        -R^\top V C  \\
        C
    \end{pmatrix}, 
\end{align}
where 
\begin{align*}
    C = \begin{pmatrix}
        \mathbf{0}_n^\top \\
        \mathbf{1}_n^\top\\
        -I_n
    \end{pmatrix}\in \mathbb{R}^{(n+2)\times n}.
\end{align*}
Similarly, the measurement model for the GNSS position \eqref{eq:gnss_meas_model} can also be written as 
\begin{align}
    \begin{pmatrix}
        y_x \\ \sigma C_x
    \end{pmatrix} =  X\begin{pmatrix}
        0_{3\times 1}\\
        \sigma C_x
    \end{pmatrix} = \begin{pmatrix}
        \sigma VC_x \\
        \sigma C_x
    \end{pmatrix}, 
\end{align}
where 
\begin{align*}
    C_x = \begin{pmatrix}
        0 \\ 1 \\ \mathbf{0}_n
    \end{pmatrix}\in \mathbb{R}^{(n+2)\times 1}.
\end{align*}
In summary, we have shown that the system state, dynamics and measurements admit powerful and compact matrix representation using the Lie group structure of $\SEn$. 
For the remainder of the paper, we identify the state space with the Lie group $ \SEn$.

\section{Observer Design}\label{sec:obs_design}
\subsection{Synchronous Observer Architecture}
We follow the synchronous observer architecture developed for group-affine systems in \cite{van2021autonomous, VANGOOR2025112328}.
Let the system state be $X \in \SEn$ as in \eqref{eq:Liegrp}.
The observer state is defined to include the state estimate $\hat{X}\in \SEn$ and an auxiliary state $Z \in \SIMn$, with the dynamics
\begin{align}\label{eq:obs_dyn}
    \dot{\hat{X}} &= \hat{X}U + G\hat{X}+N\hat{X} - \hat{X}N + Z\Delta Z^{-1}\hat{X}, \notag \\
    \dot{Z} &= (G+N)Z - Z\Gamma,
\end{align}
where $\Delta\in \senalg$ and $\Gamma\in \simnalg$ are the Lie algebra-valued correction terms to be designed later.
The associated observer error is defined as 
\begin{align}
    \bar{E} := Z^{-1}X\hat{X}^{-1}Z \in \SEn,
\end{align}
and its dynamics are given by \cite{van2021autonomous}
\begin{align}\label{eq:obs_error}
    \dot{\bar{E}} = \Gamma \bar{E} - \bar{E}\Gamma - \bar{E}\Delta.
\end{align}
Note that the error dynamics are independent of the system input and the state estimate, and they only depend linearly on the correction terms $\Delta$ and $\Gamma$. 
Hence, the observer and the system are \emph{$\bar{E}$-synchronous} \cite{van2021autonomous}.

For simplicity in the observer design and analysis, we choose to fix the rotation term in $Z$ to $R_Z \equiv I_3$ by setting $\Omega_\Gamma = 0_{3\times 3}$.
Thus, in the following sections, the components of $Z$ are $(R_Z, V_Z, A_Z) \equiv (I_3, V_Z, A_Z)$, with only $V_Z$ and $A_Z$ being dynamic.
The rotational and translational components of the error $R_{\bar{E}} \in \SO$ and $V_{\bar{E}}\in \mathbb{R}^{3\times (n+2)}$ are obtained as
\begin{align*}
    R_{\bar{E}} &= R\hat{R}^\top, \\
    V_{\bar{E}} &= (VA_Z - V_Z) - R_{\bar{E}}(\hat{V}A_Z - V_Z).
\end{align*}
Their dynamics are given by
\begin{subequations}
\begin{align}
    \dot{R}_{\bar{E}} &= -R_{\bar{E}}\Omega_\Delta^\times, \label{eq:error_dyn_RE} \\
    \dot{V}_{\bar{E}} &= -V_{\bar{E}} S_\Gamma + (I_3 - R_{\bar{E}})W_\Gamma - R_{\bar{E}}W_\Delta.  \label{eq:error_dyn_VE}
\end{align}
\end{subequations}
Our goal is to design the correction terms $(\Omega_\Delta, W_\Delta)$ and $(W_\Gamma, S_\Gamma)$ such that the error $\bar{E} \to I$ asymptotically.

\subsection{Design of Correction Terms}

By taking advantage of the synrhony of the observer error, we may design individual correction terms for the GNSS position, the landmark positions, and the magnetometer measurements such that they independently decrease a chosen Lyapunov function candidate.
The final correction term is then chosen as the sum of the individual correction terms for each sensor.
As the error dynamics depend linearly on the correction terms, it follows that the Lyapunov function derivative will thus be negative semi-definite for the final correction term, as shown in Theorem 5.4 in \cite{VANGOOR2025112328}.
We propose the Lyapunov function $\mathcal{L}(\bar{E})$ for the observer error $\bar{E} = (R_{\bar{E}}, V_{\bar{E}})\in \SEn$ as
\begin{align}\label{eq:lyap_fn}
    \mathcal{L}(\bar{E}) := |V_{\bar{E}}|^2 + \tr(I_3 - R_{\bar{E}}).
\end{align}
For arbitrary correction terms $\Delta \in \senalg, \Gamma \in \simnalg$, the derivative of this Lyapunov function along the error dynamics \eqref{eq:error_dyn_RE}, \eqref{eq:error_dyn_VE} is
\begin{align}
    \dot{\mathcal{L}}(\Delta, \Gamma) &= \langle V_{\bar{E}}, -V_{\bar{E}} S_\Gamma + (I_3 - R_{\bar{E}})W_\Gamma - R_{\bar{E}}W_\Delta \rangle \notag \\& \qquad + \tr(R_{\bar{E}} \Omega_\Delta^\times).
\end{align}

\subsubsection{GNSS Position Correction Terms}

We label the correction terms for the GNSS position as $\Delta^x = (\Omega_\Delta^x, W_\Delta^x) \in \senalg$ and $\Gamma^x = (0_{3\times 3}, W_\Gamma^x, S_\Gamma^x) \in \simnalg$. 

\begin{lemma}\label{lem:gnss_correction}
Choose the positive gains $k_x, k_{Rx}, q > 0$ and define the correction terms for the GNSS position measurements as
\begin{align*}
    W_\Delta^x &= (k_x+ k_{Rx})(y_x - \sigma \hat{x})C_x^\top A_Z^{-\top}, \\
    W_\Gamma^x &= -(k_x + k_{Rx})(y_x - \sigma V_ZA_Z^{-1}C_x)C_x^\top A_Z^{-\top}, \\
    \Omega_\Delta^x &= 4k_{Rx} \sigma(\hat{x} - V_ZA_Z^{-1}C_x)^\times (y_x - \sigma V_ZA_Z^{-1}C_x), \\
    S_\Gamma^x &= -\frac{k_x\sigma}{2}A_Z^{-1}C_xC_x^\top A_Z^{-\top} + qI_{n+2}.
\end{align*}
Then, the derivative of the Lyapunov function satisfies 
\begin{align*}
    \dot{\mathcal{L}}(\Delta^x, \Gamma^x) &\le -2k_{Rx}(|(I-R_{\bar{E}}^2)(y_x - \sigma V_ZA_Z^{-1}C_x)| \notag \\
    & \qquad - \sigma |V_{\bar{E}}A_Z^{-1}C_x|)^2-\sigma k_x|V_{\bar{E}}A_Z^{-1}C_x|^2 \notag\\ &\qquad - 2q|V_{\bar{E}}|^2 \le 0 
\end{align*}
\end{lemma}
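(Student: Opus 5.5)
The plan is a direct substitution: insert the four correction terms of Lemma~\ref{lem:gnss_correction} into the expression for $\dot{\mathcal{L}}(\Delta,\Gamma)$, rewrite every measured quantity in terms of the error $\bar{E}$, and complete a square. I read the translational part of $\dot{\mathcal{L}}$ as $2\langle V_{\bar{E}}, \dot V_{\bar{E}}\rangle$, since $\frac{d}{dt}|V_{\bar{E}}|^2$ carries a factor $2$. The factors $2q$ and $2k_{Rx}$ in the claimed bound require this factor. Throughout I use $\sigma^2=\sigma$.

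\emph{Step 1 (error expressions for the innovations).} Set $z := V_Z A_Z^{-1}C_x$, and define
\[
w := \sigma(x - z) = y_x - \sigma z, \qquad u := \sigma(\hat{x} - z), \qquad e := \sigma V_{\bar{E}} A_Z^{-1} C_x .
\]
From $V_{\bar{E}} A_Z^{-1} = V - V_Z A_Z^{-1} - R_{\bar{E}}(\hat{V} - V_Z A_Z^{-1})$ and $VC_x = x$, multiplying by $C_x$ gives $e = w - R_{\bar{E}} u$, that is, $u = R_{\bar{E}}^\top (w - e)$. We also have $y_x - \sigma\hat{x} = w - u$. With these, $W_\Delta^x = k(w-u)C_x^\top A_Z^{-\top}$, $W_\Gamma^x = -k w C_x^\top A_Z^{-\top}$ and $\Omega_\Delta^x = 4k_{Rx}\, u^\times w$, where $k := k_x + k_{Rx}$.

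\emph{Step 2 (translational terms).} The combination $(I_3 - R_{\bar{E}})W_\Gamma^x - R_{\bar{E}} W_\Delta^x$ equals
\[
-k\bigl[(I - R_{\bar{E}})w + R_{\bar{E}}(w - u)\bigr] C_x^\top A_Z^{-\top} = -k\,(w - R_{\bar{E}}u)\, C_x^\top A_Z^{-\top} = -k\, e\, C_x^\top A_Z^{-\top}.
\]
Pairing with $2V_{\bar{E}}$ gives $-2k|e|^2$. The $S_\Gamma^x$ term gives $-2\langle V_{\bar{E}}, V_{\bar{E}} S_\Gamma^x\rangle = k_x\sigma |V_{\bar{E}}A_Z^{-1}C_x|^2 - 2q|V_{\bar{E}}|^2 = k_x|e|^2 - 2q|V_{\bar{E}}|^2$. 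The total translational contribution is therefore
\[
-\sigma k_x |V_{\bar{E}}A_Z^{-1}C_x|^2 - 2k_{Rx}|e|^2 - 2q|V_{\bar{E}}|^2 .
\]

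\emph{Step 3 (rotational term, the main step).} Compute $\tr(R_{\bar{E}}(u\times w)^\times) = \tr\bigl(R_{\bar{E}}(wu^\top - uw^\top)\bigr) = u^\top R_{\bar{E}} w - w^\top R_{\bar{E}} u$. Substituting $u = R_{\bar{E}}^\top(w-e)$ gives
\[
u^\top R_{\bar{E}} w - w^\top R_{\bar{E}} u = w^\top R_{\bar{E}}^2 w - |w|^2 + e^\top (I - R_{\bar{E}}^2) w .
\]
Since $R_{\bar{E}}^2$ is a rotation, $w^\top(I - R_{\bar{E}}^2)w = \tfrac12 |(I - R_{\bar{E}}^2)w|^2$. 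Hence the rotational term is
\[
-2k_{Rx}|(I - R_{\bar{E}}^2)w|^2 + 4k_{Rx}\, e^\top (I - R_{\bar{E}}^2) w \;\le\; -2k_{Rx}|(I - R_{\bar{E}}^2)w|^2 + 4k_{Rx}|e|\,|(I - R_{\bar{E}}^2)w|,
\]
using Cauchy--Schwarz. The delicate points are getting the sign of $\frac{d}{dt}\tr(I - R_{\bar{E}}) = \tr(R_{\bar{E}}\Omega_\Delta^\times)$ right and the identity $u = R_{\bar{E}}^\top(w-e)$.

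\emph{Step 4 (completing the square).} Adding the $-2k_{Rx}|e|^2$ from Step~2 to the bound in Step~3 gives $-2k_{Rx}\bigl(|(I - R_{\bar{E}}^2)w| - |e|\bigr)^2$. Here $w = y_x - \sigma V_Z A_Z^{-1}C_x$ and $|e| = \sigma |V_{\bar{E}}A_Z^{-1}C_x|$. Together with the remaining terms from Step~2 this is exactly the claimed bound, and every term is nonpositive.
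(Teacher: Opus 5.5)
Your proof is correct and takes essentially the same route as the paper. The paper defers to Lemmas 5.5--5.6 of \cite{VANGOOR2025112328}, and in its own treatment of the landmark analogue (Appendix~\ref{app:proof_lemma_landmark}) it likewise rewrites the innovations in terms of $\bar{E}$, cancels the translational cross terms, evaluates $\tr(R_{\bar{E}}\Omega_\Delta^\times)$, and completes the square via Cauchy--Schwarz; your reading of the translational part as $2\langle V_{\bar{E}},\dot V_{\bar{E}}\rangle$ is also right, matching the paper's computations in Appendix~\ref{app:proof_lemma_landmark} and item (ii) of Theorem~\ref{thm:main_result} rather than the factor-free expression displayed for $\dot{\mathcal{L}}(\Delta,\Gamma)$.
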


\begin{proof}
    The proof closely follows \cite[Lemma 5.5, 5.6]{VANGOOR2025112328}, with minor modifications to include $\sigma$ and simplify $S_\Gamma^x$.
\end{proof}
\subsubsection{Correction terms using landmark positions}
Let the correction terms using the landmark position measurements be $\Delta^p = (\Omega_\Delta^p, W_\Delta^p) \in \senalg$ and $\Gamma^p = (0_{3\times 3}, W_\Gamma^p, S_\Gamma^p) \in \simnalg$.
\begin{lemma}\label{lem:landmark_correction}
Choose the positive gains $k_p, k_{Rp} > 0$ and define the landmark position measurements correction terms as
\begin{align*}
    W_\Delta^p &= -(k_p + nk_{Rp})\hat{R}(Y - \hat{Y})C^\top A_Z^{-\top}, \\
    W_\Gamma^p &= (k_p + nk_{Rp})V_ZA_Z^{-1}CC^\top A_Z^{-\top}, \\
    S_\Gamma^p &= -\frac{k_p}{2}A_Z^{-1}CC^\top A_Z^{-\top}, \\
    \Omega_\Delta^p &= 4k_{Rp} (V_ZA_Z^{-1}C\mathbf{1}_n)^\times \hat{R}(Y-\hat{Y})\mathbf{1}_n,
\end{align*}
where $Y$ is the landmark position measurement in \eqref{eq:landmark_meas_matrix}, and $\hat{Y} = -\hat{R}^\top \hat{V}C$ is the estimated landmark position.
Then, the derivative of the Lyapunov function satisfies 
\begin{align*}
    &\dot{\mathcal{L}}(\Delta^p, \Gamma^p) \\&\le  - 2k_{Rp}(\sqrt{n} |V_{\bar{E}}A_Z^{-1}C| - | (I- R_{\bar{E}}^2)V_ZA_Z^{-1}C \mathbf{1}_n |)^2 \\
    &\qquad -k_p|V_{\bar{E}}A_Z^{-1}C|^2 \le 0
\end{align*}
\end{lemma}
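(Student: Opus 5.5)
The plan is to mirror the computation behind Lemma~\ref{lem:gnss_correction} (i.e.\ \cite[Lemma 5.5, 5.6]{VANGOOR2025112328}). The first step rewrites the landmark innovation purely in terms of the error $\bar{E}$ and the auxiliary state $Z$. I abbreviate $M := A_Z^{-1}C \in \mathbb{R}^{(n+2)\times n}$ and $b := V_Z M\mathbf{1}_n \in \mathbb{R}^3$.

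\emph{Step 1: express the innovation in error coordinates.} From $Y = -R^\top V C$ and $\hat{Y} = -\hat{R}^\top\hat{V}C$ we get
\[
\hat{R}(Y-\hat{Y}) = -R_{\bar{E}}^\top\bigl(VC - R_{\bar{E}}\hat{V}C\bigr).
\]
Multiplying the expression for $V_{\bar{E}}$ on the right by $A_Z^{-1}C$ gives $V_{\bar{E}}M = (VC - R_{\bar{E}}\hat{V}C) - (I_3-R_{\bar{E}})V_Z M$. Combining the two,
\[
\hat{R}(Y-\hat{Y}) = -R_{\bar{E}}^\top\bigl(V_{\bar{E}}M + (I_3-R_{\bar{E}})V_Z M\bigr).
\]
This identity is the whole mechanism. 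It shows that the innovation is computable from measurements, yet it equals a known function of $(\bar{E}, Z)$.

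\emph{Step 2: the translational part.} Substitute $W_\Delta^p$, $W_\Gamma^p$ and $S_\Gamma^p$ into \eqref{eq:error_dyn_VE}, writing $k := k_p + nk_{Rp}$.
\begin{itemize}
\item With the identity from Step 1, $-R_{\bar{E}}W_\Delta^p = -k\bigl(V_{\bar{E}}M + (I_3-R_{\bar{E}})V_ZM\bigr)M^\top$.
\item The term $(I_3-R_{\bar{E}})W_\Gamma^p = k(I_3-R_{\bar{E}})V_ZMM^\top$ cancels the $V_Z$ part of the previous line exactly.
\item What remains of the $W$-terms is $-kV_{\bar{E}}MM^\top$.
\item The $S_\Gamma^p$ term contributes $+\tfrac{k_p}{2}V_{\bar{E}}MM^\top$.
\end{itemize}
Since $\tfrac{d}{dt}|V_{\bar{E}}|^2 = 2\langle V_{\bar{E}},\dot{V}_{\bar{E}}\rangle$, the translational contribution to $\dot{\mathcal{L}}$ is $(k_p - 2k)|V_{\bar{E}}M|^2 = -k_p|V_{\bar{E}}M|^2 - 2nk_{Rp}|V_{\bar{E}}M|^2$. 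This produces the $-k_p|V_{\bar{E}}A_Z^{-1}C|^2$ term of the lemma and leaves a surplus $-2nk_{Rp}|V_{\bar{E}}M|^2$ to be spent in the square completion.

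\emph{Step 3: the rotational part (the main obstacle).} Using the identity from Step 1, $\hat{R}(Y-\hat{Y})\mathbf{1}_n = -R_{\bar{E}}^\top\bigl(V_{\bar{E}}M\mathbf{1}_n + (I_3-R_{\bar{E}})b\bigr)$. Substituting into $\tr(R_{\bar{E}}\Omega_\Delta^{p\times})$ and applying the identity $\tr(R\,\omega^\times) = -\omega^\top \mathrm{vex}(R-R^\top)$ splits the rotational term into two pieces, to be handled in order.
\begin{itemize}
\item The piece in $(I_3-R_{\bar{E}})b$ is to be shown, by the same $\SO$ algebra as in \cite[Lemma 5.6]{VANGOOR2025112328}, to equal $-2k_{Rp}|(I_3-R_{\bar{E}}^2)b|^2$. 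Concretely, I would write $R_{\bar{E}}$ in angle–axis form and check that $\tfrac12(R_{\bar{E}}-R_{\bar{E}}^\top)(I_3-R_{\bar{E}}^\top)$-type expressions reproduce $(I_3-R_{\bar{E}}^2)$. The gain $4k_{Rp}$ is chosen to make the constants match.
\item The cross piece in $V_{\bar{E}}M\mathbf{1}_n$ is to be bounded by Cauchy--Schwarz by $4k_{Rp}|(I_3-R_{\bar{E}}^2)b|\,|V_{\bar{E}}M\mathbf{1}_n|$. Then $|V_{\bar{E}}M\mathbf{1}_n| \le \sqrt{n}\,|V_{\bar{E}}M|$, since $|\mathbf{1}_n|=\sqrt{n}$.
\end{itemize}

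\emph{Step 4: assemble.} Adding the surplus $-2nk_{Rp}|V_{\bar{E}}M|^2$ from Step 2 to the two rotational pieces from Step 3 completes the square $-2k_{Rp}\bigl(\sqrt{n}|V_{\bar{E}}M| - |(I_3-R_{\bar{E}}^2)b|\bigr)^2$. Together with $-k_p|V_{\bar{E}}M|^2$ this gives the stated bound, and nonpositivity is immediate.

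I expect the delicate point to be the exact constant and sign bookkeeping in Step 3. Specifically, one must verify that the $b$-dependent rotational term is exactly $-2k_{Rp}|(I_3-R_{\bar{E}}^2)b|^2$ rather than merely bounded by it. If it is only bounded, the inequality still goes through, provided the bound has the right sign. In that case I would redo the trace identity explicitly via Rodrigues' formula.
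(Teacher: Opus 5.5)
Your proposal is correct and follows the paper's proof (Appendix~\ref{app:proof_lemma_landmark}) essentially step for step. It uses the same error-coordinate identity for $\hat{R}(Y-\hat{Y})$ and the same translational contribution $-(k_p+2nk_{Rp})|V_{\bar{E}}A_Z^{-1}C|^2$. It makes the same split of the rotational term into a $b$-piece and a cross term, bounds the cross term by Cauchy--Schwarz with $|V_{\bar{E}}A_Z^{-1}C\mathbf{1}_n|\le\sqrt{n}\,|V_{\bar{E}}A_Z^{-1}C|$, and completes the same square.

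The step you flag is an exact identity, and it needs no Rodrigues expansion. Since the $b$-piece of $\Omega_\Delta^p$ is $4k_{Rp}(R_{\bar{E}}^\top b)\times b$, the identity $(u\times w)^\times = wu^\top - uw^\top$ gives $4k_{Rp}\tr\bigl(R_{\bar{E}}((R_{\bar{E}}^\top b)\times b)^\times\bigr) = 4k_{Rp}(b^\top R_{\bar{E}}^2 b - |b|^2) = -2k_{Rp}|(I_3-R_{\bar{E}}^2)b|^2$.
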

\begin{proof}
For the complete proof, refer to Appendix~\ref{app:proof_lemma_landmark}.
\end{proof}
\subsubsection{Correction terms using magnetometer}
Let the correction terms designed using the magnetometer direction readings be $\Delta^m = (\Omega_\Delta^m, W_\Delta^m)\in \senalg$ and $\Gamma^m = (0_{3\times 3}, W_\Gamma^m, S_\Gamma^m)\in \simnalg$.
\begin{lemma}\label{lem:magneto_correction}
    Choose the positive gain $k_m>0$ and define the correction terms for magnetometer measurement as
    \begin{align*}
        W_\Delta^m &=0_{3\times(n+2)}, \;\; W_\Gamma^m = 0_{3\times(n+2)}, \;\; S_\Gamma^m = 0_{(n+2)\times(n+2)},\\
        \Omega_\Delta^m &= 4k_m (\hat{R}y_m) \times\mathring{y}_m.
    \end{align*}
 Then, the derivative of the Lyapunov function is
    \begin{align*}
        \dot{\mathcal{L}}(\Delta^m, \Gamma^m) = -2k_m |(I-R_{\bar{E}})^2\mathring{y}_m|^2 \le 0
    \end{align*}
\end{lemma}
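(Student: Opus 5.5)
The plan is a direct computation, since the magnetometer correction only acts on the rotational error. Substituting $W_\Delta^m = 0$, $W_\Gamma^m = 0$ and $S_\Gamma^m = 0$ into the general expression for $\dot{\mathcal{L}}(\Delta,\Gamma)$ removes every term involving $V_{\bar{E}}$. This leaves
\[
\dot{\mathcal{L}}(\Delta^m,\Gamma^m) = \tr\big(R_{\bar{E}}(\Omega_\Delta^m)^\times\big).
\]
I will then rewrite $\Omega_\Delta^m$ purely in terms of the error. Using the measurement model \eqref{eq:compass_meas_model}, $\hat{R}y_m = \hat{R}R^\top\mathring{y}_m = R_{\bar{E}}^\top \mathring{y}_m$. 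Hence
\[
\Omega_\Delta^m = 4k_m (R_{\bar{E}}^\top\mathring{y}_m)\times \mathring{y}_m,
\]
which is independent of the observer and system states except through $R_{\bar{E}}$.

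Next I will evaluate the trace with the standard identity for a cross product's skew matrix, $(a\times b)^\times = ba^\top - ab^\top$. For any $R$ this gives
\[
\tr\big(R(a\times b)^\times\big) = \tr(Rba^\top) - \tr(Rab^\top) = a^\top R b - b^\top R a .
\]
Set $a = R_{\bar{E}}^\top\mathring{y}_m$, $b = \mathring{y}_m$ and $R = R_{\bar{E}}$. Then:
\begin{itemize}
\item $a^\top R b = \mathring{y}_m^\top R_{\bar{E}}^2\mathring{y}_m$;
\item $b^\top R a = \mathring{y}_m^\top R_{\bar{E}}R_{\bar{E}}^\top \mathring{y}_m = |\mathring{y}_m|^2 = 1$, because $\mathring{y}_m\in\mathbf{S}^2$.
\end{itemize}
Therefore
\[
\dot{\mathcal{L}} = -4k_m\big(1 - \mathring{y}_m^\top R_{\bar{E}}^2\mathring{y}_m\big).
\]

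Finally, I will recognise this as a squared norm. Since $R_{\bar{E}}^2$ is orthogonal,
\[
|(I-R_{\bar{E}}^2)\mathring{y}_m|^2 = 2 - 2\,\mathring{y}_m^\top R_{\bar{E}}^2\mathring{y}_m .
\]
Combining the two displays gives $\dot{\mathcal{L}} = -2k_m|(I-R_{\bar{E}}^2)\mathring{y}_m|^2 \le 0$. This is the analogue of the rotational terms in Lemmas~\ref{lem:gnss_correction} and~\ref{lem:landmark_correction}, which also feature $(I-R_{\bar{E}}^2)$.

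The only delicate point is sign and ordering bookkeeping: the transpose in $\hat{R}y_m = R_{\bar{E}}^\top\mathring{y}_m$, the order in the cross product, and the skew-matrix trace identity. A sign slip there would turn the result into a positive term. I will check the identity on a simple example, such as $R_{\bar{E}}$ a small rotation about an axis orthogonal to $\mathring{y}_m$, to confirm the sign.

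The computation yields $(I-R_{\bar{E}}^2)$ rather than $(I-R_{\bar{E}})^2$ as printed in the statement. I will note that these expressions differ; for example, at $R_{\bar{E}}$ a rotation by $\pi$ about an axis orthogonal to $\mathring{y}_m$ they give $0$ and $4$ respectively. I read the printed form as a typographical slip for $(I-R_{\bar{E}}^2)$, consistent with the other lemmas. Either way, $\dot{\mathcal{L}}(\Delta^m,\Gamma^m) \le 0$ follows.
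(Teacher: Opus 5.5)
Your direct computation is correct and is essentially the paper's argument: the paper just cites \cite[Lemma 5.8]{VANGOOR2025112328}, and that result rests on the same calculation of $\tr(R_{\bar{E}}(\Omega_\Delta^m)^\times)$ via $\hat{R}y_m = R_{\bar{E}}^\top\mathring{y}_m$ and the skew-matrix trace identity, which the paper also uses for the landmark rotational term in Appendix~\ref{app:proof_lemma_landmark}. You are also right that the printed $(I-R_{\bar{E}})^2$ is a typo for $(I-R_{\bar{E}}^2)$, the form the paper itself uses in the proof of Theorem~\ref{thm:main_result}.
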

\begin{proof}
    See \cite[Lemma 5.8]{VANGOOR2025112328}.
\end{proof}

\subsection{Observer Analysis}

Having defined correction terms for each individual measurement type, we combine them in the following theorem and show that the Lyapunov function is stable and that the observer error converges almost-globally asymptotically to the identity.

\begin{thm}\label{thm:main_result}
    Consider the LI-SLAM system state $X \in \SEn$ with dynamics \eqref{eq:lie_grp_dynamics} and the measurements in \mbox{(\ref{eq:landmark_meas_model}-\ref{eq:gnss_meas_model})}. 
    Let $\hat{X} \in \SEn$ and $Z \in \SIMn$ be the state estimate and the auxiliary state of the observer, respectively, with dynamics \eqref{eq:obs_dyn}.
    Define the correction terms $\Delta = \sum_{i=\{x,p,m\}}\Delta^i$ and $\Gamma = \sum_{i=\{x,p,m\}}\Gamma^i$, where $(\Delta^x, \Gamma^x)$, $(\Delta^p, \Gamma^p)$ and $(\Delta^m, \Gamma^m)$ are the correction terms designed in lemma \ref{lem:gnss_correction}, \ref{lem:landmark_correction} and \ref{lem:magneto_correction}, respectively.
    Choose the observer gains $k_x, k_p > 0$ and $q >0$ satisfying 
    \begin{align}\label{eq:condition_gain}
          2n\tau q e^{-2qT}k_p + (8q^2\tau^2 e^{-4qT}-1)k_x > 0.
    \end{align}
    Initialize $A_Z(0)$ such that $P_0 := A_Z(0)A_Z(0)^\top$ satisfies
    \begin{equation}\label{eq:P_initial_conditions}
    \begin{aligned}
        &P_0 = \begin{pmatrix}
        s_v(0) & s_{vx}(0) & \frac{k_p}{4q^2}\mathbf{1}_n^\top \\
        s_{vx}(0) & s_{x}(0) & -\frac{k_p}{2q} \mathbf{1}_n^\top \\
        \frac{k_p}{4q^2}\mathbf{1}_n & -\frac{k_p}{2q} \mathbf{1}_n & \frac{k_p}{2q} I_n
    \end{pmatrix}, \\
    &\frac{nk_p}{2q} + k_x e^{-2qT}\tau \le s_x(0) \le \frac{nk_p+k_x}{2q}, \\
    &-\frac{nk_p+k_x}{4q^2} \le s_{vx}(0) \le -\frac{nk_p}{4q^2} -\frac{ k_x e^{-2qT}\tau}{2q}, \\
    &\frac{nk_p}{4q^3} +\frac{ k_x e^{-2qT}\tau}{2q^2}\le s_v(0) \le \frac{nk_p+k_x}{4q^3}.
    \end{aligned}
    \end{equation}
    Let the observer error be $\bar{E}$ as defined in \eqref{eq:obs_error}, then
    \begin{itemize}
        \item[(i)] The correction terms are bounded for all time.
        \item[(ii)] The translation error $V_{\bar{E}}$ is globally exponentially stable to zero.
        \item[(iii)] The rotation error $R_{\bar{E}}$ is almost globally asymptotically stable (AGAS) and locally exponentially stable to the identity.
        Hence, $\bar{E}$ is AGAS to the identity with the unstable equilibria given by 
        \begin{align*}
            \mathcal{E}_u = \cset{\bar{E}\in \SEn}{\tr(R_{\bar{E}})=-1}.
        \end{align*}
        \item[(iv)] If the error converges to identity, $\bar{E} \to I_{n+5}$, then the estimated state converges to the true state, $\hat{X}\to X$.
    \end{itemize}
\end{thm}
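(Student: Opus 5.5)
The plan is to reduce everything to the behaviour of the matrix $P := A_Z A_Z^\top$, which carries the information about $Z$.

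\textbf{Dynamics of $P$.} Since $R_Z\equiv I_3$ and $\dot Z=(G+N)Z-Z\Gamma$, the block $A_Z$ obeys $\dot A_Z = S_N A_Z - A_Z S_\Gamma$. With $S_\Gamma=S_\Gamma^x+S_\Gamma^p$, the matrix $P$ satisfies the Riccati-like but linear equation
\begin{align*}
\dot P = S_N P + P S_N^\top - 2qP + k_p CC^\top + \sigma k_x C_xC_x^\top.
\end{align*}
This is a linear time-varying Lyapunov differential equation with a positive semidefinite forcing term.

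\textbf{Step (i): boundedness.} I would solve this equation explicitly, using that $S_N$ is nilpotent, so that $e^{S_N t}$ is polynomial in $t$. The initialization \eqref{eq:P_initial_conditions} has been chosen so that the landmark block is already at its steady value ($\frac{k_p}{2q}I_n$ and the cross terms). Only the scalar entries $s_x, s_{vx}, s_v$ then evolve, driven by constants plus $\sigma k_x$. Direct integration of these three scalar ODEs would show that the interval bounds in \eqref{eq:P_initial_conditions} are forward invariant. The upper bounds follow by taking $\sigma\equiv1$. The lower bounds use the TPE property: over the last window of length $T$ the input contributes at least $k_x\tau e^{-2qT}$ after discounting.

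I then need $P$ uniformly positive definite, i.e.\ a lower bound $P\ge \mu I$. I would check this through a Schur complement of the $2\times2$ $(v,x)$ block against the landmark block. I expect the determinant-type computation here to be exactly where condition \eqref{eq:condition_gain} appears, and this is the main obstacle. My first attempt would be to compute the Schur complement explicitly using the bounds on $s_x, s_{vx}, s_v$ and show its determinant is positive iff \eqref{eq:condition_gain} holds.

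With $\mu I\le P\le MI$ established, $A_Z^{-1}$ is bounded. It remains to bound $V_Z$, which satisfies a linear equation with stable part $-V_Z S_\Gamma$ and bounded forcing. Together these give boundedness of all correction terms, since the terms involving $y_x-\sigma V_ZA_Z^{-1}C_x$ are expressed through $V_{\bar E}$, $R_{\bar E}$ and $V_Z$.

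\textbf{Step (ii): translation error.} Summing Lemmas \ref{lem:gnss_correction}--\ref{lem:magneto_correction} (valid by linearity of \eqref{eq:obs_error} in the corrections) gives $\frac{d}{dt}|V_{\bar E}|^2 \le -2q|V_{\bar E}|^2$ from the $qI$ term in $S^x_\Gamma$. More directly, \eqref{eq:error_dyn_VE} restricted to the $V$-part gives this decay, so $V_{\bar E}$ is globally exponentially stable.

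\textbf{Step (iii): rotation error.} I would argue by LaSalle/Barbalat on $\mathcal L$. Because $V_{\bar E}\to0$ and the signals are bounded, $\dot{\mathcal L}$ is uniformly continuous, so the GNSS term $|(I-R_{\bar E}^2)(y_x-\sigma V_ZA_Z^{-1}C_x)|$ and the magnetometer term $|(I-R_{\bar E})^2\mathring y_m|$ tend to zero. Combined with TPE excitation of $y_x-\sigma V_ZA_Z^{-1}C_x$, this forces $R_{\bar E}^2\to I$. Hence $R_{\bar E}\to I$ or $\tr R_{\bar E}=-1$.

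Next I would show that the set $\mathcal E_u$ is unstable and has measure zero, by linearizing as in \cite{VANGOOR2025112328}. Local exponential stability follows from the uniform quadratic bounds near $I$.

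\textbf{Step (iv): state convergence.} From $\bar E\to I$ we get $R_{\bar E}=R\hat R^\top\to I$. The relation $V_{\bar E}=(V-\hat V)A_Z+(I-R_{\bar E})(\hat VA_Z-V_Z)$, together with bounded $A_Z^{-1}$ from step (i), then yields $V-\hat V\to0$, so $\hat X\to X$.
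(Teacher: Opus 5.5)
Your steps (i), (ii) and (iv) follow the paper's proof essentially line by line:
\begin{itemize}
\item the linear Lyapunov ODE for $P=A_ZA_Z^\top$, with the landmark blocks frozen at their steady values;
\item the TPE lower bound on $s_x$ over the last window of length $T$;
\item the Schur complement against $\frac{k_p}{2q}I_n$, which yields \eqref{eq:condition_gain} as a \emph{sufficient} condition (not ``iff'');
\item exponential decay of $|V_{\bar E}|^2$, and recovery of $\hat V$ from invertibility of $A_Z$.
\end{itemize}
There is one slip in (i). The term $-V_ZS_\Gamma$ is not a stable part on its own, because $S_\Gamma=qI-\frac{k_x\sigma}{2}A_Z^{-1}C_xC_x^\top A_Z^{-\top}-\frac{k_p}{2}A_Z^{-1}CC^\top A_Z^{-\top}$ need not have a positive definite symmetric part. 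Also, your ``forcing'' $W_GA_Z-W_\Gamma$ still contains $V_Z$ through $W_\Gamma^x$ and $W_\Gamma^p$. You have to move those $V_Z$-terms into the linear part. This gives $\dot V_Z=-V_ZM+B$ with $M=qI+\sigma(\frac{k_x}{2}+k_{Rx})A_Z^{-1}C_xC_x^\top A_Z^{-\top}+(\frac{k_p}{2}+nk_{Rp})A_Z^{-1}CC^\top A_Z^{-\top}\succeq qI$, as in the paper's appendix. Bounding $B$ still uses a bounded $y_x=\sigma x$, which the paper also assumes implicitly.

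The genuine gap is in (iii).
\begin{itemize}
\item $\dot{\mathcal L}$ is not uniformly continuous: it contains the switching signal $\sigma$, which jumps, as the paper explicitly points out. So Barbalat cannot be applied to $\dot{\mathcal L}$. In particular, you cannot conclude that the GNSS term $\sigma|(I-R_{\bar E}^2)\mu_x|^2$, with $\mu_x=x-V_ZA_Z^{-1}C_x$, tends to zero.
\item TPE is not the excitation you need. It only says \emph{when} GNSS is available. To get $R_{\bar E}^2=I_3$ from $(I-R_{\bar E}^2)u\to0$, you need a second direction that is not collinear with $u$.
\end{itemize}
The missing ingredient is the landmark rotational term of Lemma~\ref{lem:landmark_correction}, $|(I-R_{\bar E}^2)\mu_Z|$ with $\mu_Z=V_ZA_Z^{-1}C\mathbf{1}_n$, which your argument drops entirely. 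This term is $\sigma$-free, as is the magnetometer term. Each is a non-positive summand in \eqref{eq:Ldot}, has finite integral, and is uniformly continuous by the bounds from (i). Barbalat applied termwise, together with $V_{\bar E}\to0$, then gives $(I-R_{\bar E}^2)\mu_Z\to0$ and $(I-R_{\bar E}^2)\mathring y_m\to0$. Lemma~5 of \cite{van2023vaa} then gives $R_{\bar E}^2\to I_3$, subject to the directional non-collinearity of $\mu_Z$ and $\mathring y_m$ that this lemma needs. This is the paper's route. Its own sentence ``the integral is bounded, so $\dot{\mathcal L}\to0$'' is loose, but applying Barbalat termwise to the $\sigma$-free terms makes it rigorous.

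Likewise, ``uniform quadratic bounds near $I$'' do not give local exponential stability. With $R_{\bar E}\approx I_3+\varepsilon_R^\times$, each rotational term is only $-8k|\varepsilon_R\times u|^2$, which vanishes for $\varepsilon_R\parallel u$, so $\dot{\mathcal L}$ is merely semi-definite in $\varepsilon_R$. You need one of two things:
\begin{itemize}
\item the linearization $\dot\varepsilon_R=A(t)\varepsilon_R$ with $A(t)$ negative semi-definite and persistently exciting, together with \cite[Theorem 1]{morgan1977satbility}, as the paper does; or
\item a uniform non-collinearity assumption on $\mu_Z$ and $\mathring y_m$.
\end{itemize}
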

\begin{proof}
\underline{Proof of item (i)}: To show that the correction terms are bounded, we begin by showing that the varying elements of the auxiliary state, $A_Z$ and $V_Z$ are bounded.
Define $P := A_ZA_Z^\top$.
We begin by showing that $P$ is lower and upper bounded.
The dynamics of $P$ are given by
\begin{align*}
    \dot{P} &= \dot{A}_Z A_Z^\top + A_Z \dot{A}_Z^\top \\
    &= S_NP + PS_N^\top - A_Z(S_\Gamma + S_\Gamma^\top) A_Z^\top \\
    &= S_NP + P S_N^\top + \sigma k_x C_xC_x^\top + k_p CC^\top - 2qP \\
    &= (S_N - qI)P + P(S_N - qI)^\top + \sigma k_x C_xC_x^\top + k_p CC^\top
\end{align*}
Under the initial conditions specified in \eqref{eq:P_initial_conditions} and the assumption that $\sigma$ is temporally persistently exciting, it can be shown that the eigenvalues of the matrix $P$ are always bounded above and below (see Appendix~\ref{app:proof_P_bounded}).
Therefore, $A_Z$ is also lower and upper bounded by non-zero constants.
The dynamics of $V_Z$ have the form
\begin{align}
    \dot{V}_Z = -V_Z M + B, 
\end{align}
where the matrices $M$ and $B$ are defined in appendix \ref{app:V_z_dynamics}.
Importantly, $B$ is bounded, and $M$ is a negative definite matrix with $M < -q I$.
Therefore,
\begin{align*}
    \frac{\mathrm{d}}{\mathrm{d} t} \vert V_Z \vert^2 
     \le 2\langle V_Z, B \rangle - 2q |V_Z|^2
    \le |V_Z||B| - q |V_Z|^2,
\end{align*}
and thus $V_Z$ is upper bounded.
The correction terms are the sum and product of the auxiliary state, state estimate and measurements, all of which are bounded, so the correction terms are themselves also bounded.

\underline{Proof of item (ii)}: 
Following \eqref{eq:error_dyn_VE}, the dynamics of the translation error $V_{\bar{E}}$ are 
\begin{align*}
    \dot{V}_{\bar{E}} &= - \sigma \left(\frac{k_x}{2} + k_{Rx}\right)V_{\bar{E}} A_Z^{-1}C_xC_x^\top A_Z^{-\top} - qV_{\bar{E}}\\
    &\qquad -\left(\frac{k_p}{2} + nk_{Rp}\right) V_{\bar{E}}A_Z^{-1}CC^\top A_Z^{-\top}.
\end{align*}
Define the cost function $\mathcal{L}_V = |V_{\bar{E}}|^2$.
Then, the dynamics of the cost $\mathcal{L}_V$ is given by 
\begin{align*}
    \dot{\mathcal{L}}_V &= -\sigma(k_x + 2k_{Rx})|V_{\bar{E}}A_Z^{-1}C_x|^2 \\ &\quad-(k_p + 2nk_{Rp}) |V_{\bar{E}} A_Z^{-1}C|^2 - 2q|V_{\bar{E}}|^2 \\
    & \le -2q |V_{\bar{E}}|^2 \le -2q \mathcal{L}_V.
\end{align*}
Therefore, the translation error $V_{\bar{E}}$ is globally exponentially stable and converges to zero.

\underline{Proof of item (iii)}:
We consider the Lyapunov function defined in \eqref{eq:lyap_fn}. 
Due to modularity, the derivative of the Lyapunov function is simply the sum of the derivatives in case of individual corrections, hence
\begin{align}
    \dot{\mathcal{L}} 
    &\le  -2k_{Rx}\sigma(|(I-R_{\bar{E}}^2)\mu_x| - |V_{\bar{E}}A_Z^{-1}C_x|)^2  \notag \\
    & \qquad - 2k_{Rp}(| (I- R_{\bar{E}}^2)\mu_Z | - \sqrt{n} |V_{\bar{E}}A_Z^{-1}C|)^2 \notag\\
    & \qquad -\sigma k_x|V_{\bar{E}}A_Z^{-1}C_x|^2 - 2q|V_{\bar{E}}|^2  -k_p|V_{\bar{E}}A_Z^{-1}C|^2 \notag\\
    & \qquad -2k_m |(I-R_{\bar{E}})^2\mathring{y}_m|^2, \label{eq:Ldot}
\end{align}
where $\mu_x = (x -  V_ZA_Z^{-1}C_x)$ and $\mu_Z = V_ZA_Z^{-1}C \mathbf{1}_n$.
This is clearly negative semi-definite, but it is notably not continuous, let alone uniformly continuous, due to the presence of the signal $\sigma$.
Nonetheless, as $\dot{\mathcal{L}}\le 0$ and $\mathcal{L}\ge0$ by construction, we have $0\le \mathcal{L}(\infty)\le \mathcal{L}(t) \leq \mathcal{L}(0)$. 
Therefore, the integral $\int_0^\infty \dot{\mathcal{L}}(t)dt = \mathcal{L}(\infty) - \mathcal{L}(0)$ is bounded and it must be that $\dot{\mathcal{L}} \to 0$.
Moreover, since all the summands in \eqref{eq:Ldot} are non-positive, it must be that each individual term converges to zero.
Combining this with the fact that $V_{\bar{E}} \to 0 $ globally exponentially, then
\begin{align*}
     -2k_{Rx} \sigma|(I-R_{\bar{E}}^2)\mu_x|^2 &\to 0, \\
     - 2k_{Rp}  | (I- R_{\bar{E}}^2)\mu_Z |^2 &\to 0, \\
     - 2k_{m}  | (I- R_{\bar{E}}^2) \mathring{y}_m |^2 &\to 0.
\end{align*}
As a result $(I- R_{\bar{E}}^2)\mu_Z \to 0$ and $(I- R_{\bar{E}}^2)\mathring{y}_m \to 0$.
Using Lemma 5 in \cite{van2023vaa}, it follows that $R_{\bar{E}}^2 \to I_3$ or $R_{\bar{E}} \to R_{\bar{E}}^\top$.
Hence, $R_{\bar{E}} \to I_3$ or $R_{\bar{E}}\to U\Lambda U^\top$, where $U \in \SO$ and $\Lambda = \diag(1, -1, -1)$.
In the first case when $R_{\bar{E}} \to I_3$, the total error $\bar{E} \to I_{n+5}$ asymptotically.
Linearising the rotation error dynamics about $R_{\bar{E}} \approx I_3 + \varepsilon_R^\times$, for $\varepsilon_R \in \mathbb{R}^3$, and $V_{\bar{E}} \approx 0$, we have
\begin{align*}
    \dot{\varepsilon}_R =  A(t)\varepsilon_R, 
\end{align*}
where $A(t) = 4(k_{Rp}\sigma \mu_x^\times \mu_x^\times + k_{Rx}\mu_Z^\times \mu_Z^\times + k_m \mathring{y}_m^\times \mathring{y}_m^\times)$ is negative semi-definite and persistently exciting.
Hence, the local exponential stability of the rotation error dynamics follows from \cite[Theorem 1]{morgan1977satbility}.
In the second case, $\bar{E} \to (U\Lambda U^\top, 0)\in \mathcal{E}_u$ asymptotically, which can be shown as the set of unstable equilibria \cite[Theorem 4.2, item (ii)]{van2023vaa}.
\underline{Proof of item (iv)}: As the rotation error $R_{\bar{E}}\to I_3$, the estimated attitude $\hat{R} \to R$.
As $V_{\bar{E}}\to 0$ and $R_{\bar{E}}\to I_3$, we have $\hat{V}A_Z = R_{\bar{E}}^\top VA_Z - R_{\bar{E}}^\top V_{\bar{E}} - (R_{\bar{E}}^\top-I_3)V_Z \to VA_Z$.
Finally, $\hat{V} \to V$ due to invertibility of $A_Z$.
Therefore, the estimated state converges to the true state, $\hat{X} \to X$, as long as the error converges to identity, $\bar{E} \to I_{n+5}$.  

\end{proof}
\section{Simulations}\label{sec:simulations}
To validate the proposed observer design, we simulate a robot flying in a circular trajectory of radius 1m at a height of 1m, with a velocity of 1m/s.
The initial robot states are
\begin{align*}
    R(0) = I_3, && v(0) = \mathbf{e}_2, && x(0) = \mathbf{e}_1 + \mathbf{e}_3.
\end{align*}
The angular velocity $\Omega$ and acceleration $a$ measured by the IMU are
\begin{align*}
    \Omega = \begin{pmatrix}
        0 & 0 & 1
    \end{pmatrix}^\top, && a = -\mathbf{e}_1 - g\mathbf{e}_3.
\end{align*}
The robot measures five landmarks on the ground plane, with positions given by $p_1=(0.5\ 0.5\ 0)^\top,\ p_2=(0.5\ -0.5\ 0)^\top,\ p_3 = (-1\ 0.5\ 0)^\top,\ p_4=(1\ 1\ 0)^\top$ and $p_5=(-1.2\ -1.2\ 0)^\top$.
We initialised the observer states as
\begin{align*}
    &\hat{R}(0)=\exp(0.25\pi \mathbf{b}^\times), && \hat{v}(0) = (0\ 0\ 0)^\top, \\
    &\hat{x}(0) = (0\ 0\ 0)^\top, && \hat{p}_i(0) =(0\ 0\ 0)^\top,
\end{align*}
where $\mathbf{b} = (1\ 1\ 1)^\top$,
and the auxiliary states as $V_Z(0) = 0_{3 \times (n+2)}$ and 
\begin{align*}
    A_Z(0) &= \begin{pmatrix}
        36.7423 & 0 & 15.8114\ \mathbf{1}_n^\top \\
        -0.2722 & 1.3878 & -3.1623\ \mathbf{1}_n^\top \\
        \mathbf{0}_n & \mathbf{0}_n & 3.1623\ I_n 
    \end{pmatrix}.
\end{align*}
We chose the observer gains
\begin{align*}
    &k_x = 1, && k_p = 2, && q = 0.1, \\
    &k_{Rx} = 0.001, && k_{Rp}=0.0005, && k_m=0.1\ .
\end{align*}
Note that $k_x, k_p, q$ satisfy the condition \eqref{eq:condition_gain}.
The system and observer equations are simulated for 40s using Lie group Euler integration at 2000 Hz.
The GNSS measurements are available periodically for regular intervals of duration 5s starting from 5s.

Figure \ref{fig:traj} shows the evolution of the true and estimated robot and landmark positions, and shows that the estimates converge over time.
Figure \ref{fig:3axis_err} shows that the errors in attitude, velocity, robot position, and landmark positions, all converge to zero over time.
It also shows that the value of the Lyapunov function is strictly decreasing, verifying the results in Theorem \ref{thm:main_result}. 
In particular, it can be seen that the position converge only when the GNSS signal is available (shown by green highlighting).

\begin{figure}[h!]
    \centering
    \includegraphics[trim=0cm 0cm 0cm 0cm, clip, width=0.45\textwidth]{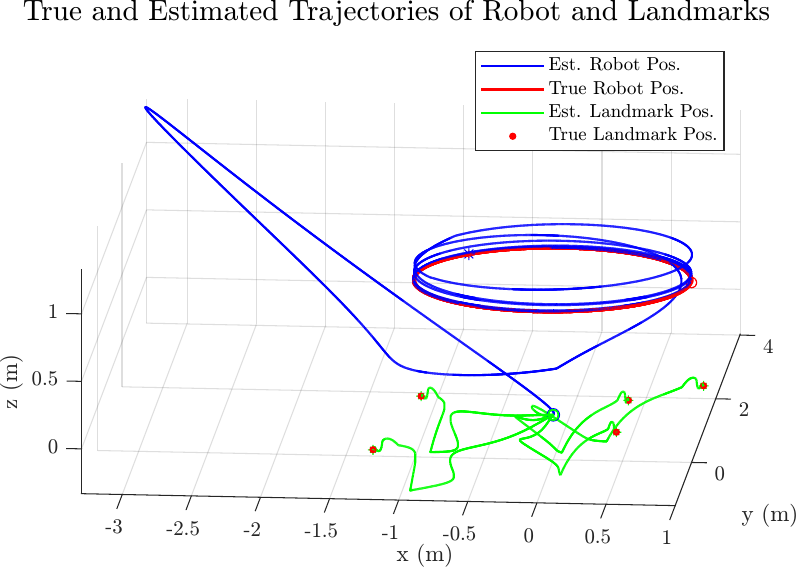}
    \caption{The trajectories of the true and estimated robot and landmark positions over time. 
    All the initial and final positions are marked with $\circ$ and $*$ respectively.
    }
    \label{fig:traj}
\end{figure}
\begin{figure}[h!]
    \centering
    \includegraphics[trim=0cm 0cm 0cm 0cm, clip, width=0.48\textwidth]{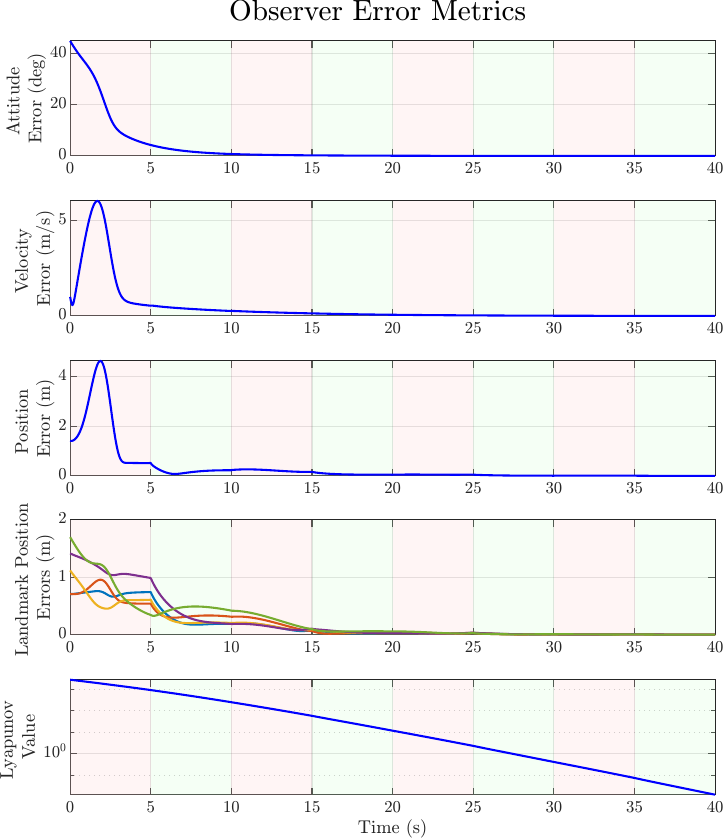}
    \caption{The errors in attitude, velocity and the positions of the robot and landmarks over time. 
    The green and red background shows whether GNSS is available or not.
    }
    \label{fig:3axis_err}
\end{figure}

\section{Conclusion}\label{sec:conclusion}
This paper presented a novel synchronous observer design for Landmark-inertial SLAM aided with GNSS position and magnetometer measurements.
In the standard inertial SLAM problem with only landmark measurements, the robot and landmark positions in the inertial frame and the yaw are not observable, necessitating the addition of GNSS position and magnetometer measurements to make the entire system observable.
The measurement model also of GNSS signals explicitly considered intermittency, which is a significant theoretical and practical challenge to observer design.
We exploited the modularity of synchronous observers to develop correction terms individually before adding them together, greatly simplifying the design process, and resulting in an observer where the intermittent measurements are seamlessly integrated.
We showed the observer error to be almost globally asymptotically and locally exponentially stable as long as the GNSS signal is temporally persistently exciting.
The simulations validate the convergence of states and errors even with poor initial estimates.
This work contributes to the ongoing development of synchronous observers that provide powerful convergence guaranties for inertial navigation and SLAM.

\bibliographystyle{IEEEtran}
\bibliography{IEEEabrv,mybibfile}

\begin{thebibliography}{10}
\providecommand{\url}[1]{#1}
\csname url@rmstyle\endcsname
\providecommand{\newblock}{\relax}
\providecommand{\bibinfo}[2]{#2}
\providecommand\BIBentrySTDinterwordspacing{\spaceskip=0pt\relax}
\providecommand\BIBentryALTinterwordstretchfactor{4}
\providecommand\BIBentryALTinterwordspacing{\spaceskip=\fontdimen2\font plus
\BIBentryALTinterwordstretchfactor\fontdimen3\font minus \fontdimen4\font\relax}
\providecommand\BIBforeignlanguage[2]{{%
\expandafter\ifx\csname l@#1\endcsname\relax
\typeout{** WARNING: IEEEtran.bst: No hyphenation pattern has been}%
\typeout{** loaded for the language `#1'. Using the pattern for}%
\typeout{** the default language instead.}%
\else
\language=\csname l@#1\endcsname
\fi
#2}}

\bibitem{bailey_slam:part2}
H.~Durrant-Whyte and T.~Bailey, ``Simultaneous localization and mapping: part i,'' \emph{IEEE Robotics \& Automation Magazine}, vol.~13, no.~2, pp. 99--110, 2006.

\bibitem{kaess2012isam2}
M.~Kaess, H.~Johannsson, R.~Roberts, V.~Ila, J.~J. Leonard, and F.~Dellaert, ``isam2: Incremental smoothing and mapping using the bayes tree,'' \emph{The International Journal of Robotics Research}, vol.~31, no.~2, pp. 216--235, 2012.

\bibitem{van2021constructive}
P.~van Goor, R.~Mahony, T.~Hamel, and J.~Trumpf, ``Constructive observer design for visual simultaneous localisation and mapping,'' \emph{Automatica}, vol. 132, p. 109803, 2021.

\bibitem{boughellaba2025nonlinear}
M.~Boughellaba, S.~Berkane, and A.~Tayebi, ``Nonlinear observer design for landmark-inertial simultaneous localization and mapping,'' in \emph{2025 European Control Conference (ECC)}.\hskip 1em plus 0.5em minus 0.4em\relax IEEE, 2025, pp. 1967--1972.

\bibitem{lourencco2016simultaneous}
P.~Louren{\c{c}}o, B.~J. Guerreiro, P.~Batista, P.~Oliveira, and C.~Silvestre, ``Simultaneous localization and mapping for aerial vehicles: a 3-d sensor-based gas filter,'' \emph{Autonomous Robots}, vol.~40, pp. 881--902, 2016.

\bibitem{johansen2016globally}
T.~A. Johansen and E.~Brekke, ``Globally exponentially stable kalman filtering for slam with ahrs,'' in \emph{2016 19th International Conference on Information Fusion (FUSION)}.\hskip 1em plus 0.5em minus 0.4em\relax IEEE, 2016, pp. 909--916.

\bibitem{barrau2015ekf}
A.~Barrau and S.~Bonnabel, ``An ekf-slam algorithm with consistency properties,'' \emph{arXiv preprint arXiv:1510.06263}, 2015.

\bibitem{mahony2008nonlinear}
R.~Mahony, T.~Hamel, and J.-M. Pflimlin, ``Nonlinear complementary filters on the special orthogonal group,'' \emph{IEEE Transactions on automatic control}, vol.~53, no.~5, pp. 1203--1218, 2008.

\bibitem{baldwin2009nonlinear}
G.~Baldwin, R.~Mahony, and J.~Trumpf, ``A nonlinear observer for 6 dof pose estimation from inertial and bearing measurements,'' in \emph{2009 IEEE International Conference on Robotics and Automation}.\hskip 1em plus 0.5em minus 0.4em\relax IEEE, 2009, pp. 2237--2242.

\bibitem{mahony2017geometric}
R.~Mahony and T.~Hamel, ``A geometric nonlinear observer for simultaneous localisation and mapping,'' in \emph{2017 IEEE 56th Annual Conference on Decision and Control (CDC)}.\hskip 1em plus 0.5em minus 0.4em\relax IEEE, 2017, pp. 2408--2415.

\bibitem{saha2025synchronous}
A.~Saha, P.~van Goor, A.~Franchi, and R.~Banavar, ``Synchronous observer design for landmark-inertial slam with almost-global convergence,'' \emph{arXiv preprint arXiv:2511.04531}, 2025.

\bibitem{van2023vaa}
P.~van Goor, T.~Hamel, and R.~Mahony, ``Constructive equivariant observer design for inertial velocity-aided attitude,'' \emph{IFAC-PapersOnLine}, vol.~56, no.~1, pp. 349--354, 2023.

\bibitem{VANGOOR2025112328}
P.~{van Goor}, T.~Hamel, and R.~Mahony, ``Synchronous observer design for inertial navigation systems with almost-global convergence,'' \emph{Automatica}, vol. 177, p. 112328, 2025.

\bibitem{martinelli2013observability}
A.~Martinelli \emph{et~al.}, ``Observability properties and deterministic algorithms in visual-inertial structure from motion,'' \emph{Foundations and Trends{\textregistered} in Robotics}, vol.~3, no.~3, pp. 139--209, 2013.

\bibitem{kim20056dof}
J.~Kim and S.~Sukkarieh, ``6dof slam aided gnss/ins navigation in gnss denied and unknown environments,'' \emph{Journal of Global Positioning Systems}, vol.~4, no. 1-2, pp. 120--128, 2005.

\bibitem{boche2025okvis2}
S.~Boche, J.~Jung, S.~B. Laina, and S.~Leutenegger, ``Okvis2-x: Open keyframe-based visual-inertial slam configurable with dense depth or lidar, and gnss,'' \emph{IEEE Transactions on Robotics}, 2025.

\bibitem{2012_lee_IntroductionSmoothManifolds}
J.~M. Lee, \emph{Introduction to {{Smooth Manifolds}}}, 2nd~ed., ser. Graduate {{Texts}} in {{Mathematics}}.\hskip 1em plus 0.5em minus 0.4em\relax Springer, 2012.

\bibitem{van2021autonomous}
P.~van Goor and R.~Mahony, ``Autonomous error and constructive observer design for group affine systems,'' in \emph{2021 60th IEEE Conference on Decision and Control (CDC)}.\hskip 1em plus 0.5em minus 0.4em\relax IEEE, 2021, pp. 4730--4737.

\bibitem{morgan1977satbility}
\BIBentryALTinterwordspacing
A.~P. Morgan and K.~S. Narendra, ``On the uniform asymptotic stability of certain linear nonautonomous differential equations,'' \emph{SIAM Journal on Control and Optimization}, vol.~15, no.~1, pp. 5--24, 1977. [Online]. Available: \url{https://doi.org/10.1137/0315002}
\BIBentrySTDinterwordspacing

\end{thebibliography}

\appendices
\section{Proof of Lemma \ref{lem:landmark_correction}}\label{app:proof_lemma_landmark}

The correction term $W_\Delta^p$ can be simplified
\begin{align*}
    W_\Delta^p &= -(k_p + nk_{Rp})\hat{R}(Y - \hat{Y})C^\top A_Z^{-\top} \\
    &= -(k_p + nk_{Rp})\hat{R}(-R^\top VC + \hat{R}^\top \hat{V}C)C^\top A_Z^{-1} \\
    &= (k_p + nk_{Rp})R_{\bar{E}}^\top (V - R_{\bar{E}}\hat{V})CC^\top A_Z^{-\top}.
\end{align*}
Thus, the term $(I-R_{\bar{E}})W_\Gamma^p - R_{\bar{E}}W_\Delta^p$ can be written as
\begin{align*}
    &(I-R_{\bar{E}})W_\Gamma^p - R_{\bar{E}}W_\Delta^p \\
    =& (k_p + nk_{Rp})(I- R_{\bar{E}})V_ZA_Z^{-1}CC^\top A_Z^{-\top} \\\quad
     &- (k_p + nk_{Rp})(VA_Z- R_{\bar{E}}\hat{V}A_Z)A_Z^{-1}CC^\top A_Z^{-\top} \\
    =& -(k_p + nk_{Rp})((VA_Z - R_{\bar{E}}\hat{V}A_Z) \\ \quad &- (I-R_{\bar{E}})V_Z)A_Z^{-1}CC^\top A_Z^{-\top} \\
    =& -(k_p + nk_{Rp})V_{\bar{E}} A_Z^{-1}CC^\top A_Z^{-\top}.
\end{align*}
Substituting this into the $V_{\bar{E}}$ dynamics yields
\begin{align*}
    \dot{V}_{\bar{E}} &= -V_{\bar{E}} S_\Gamma^p + (I-R_{\bar{E}})W_\Gamma^p - R_{\bar{E}}W_\Delta^p \\
    &= \frac{k_p}{2} V_{\bar{E}} A_Z^{-1} CC^\top A_Z^{-\top} \\ \quad &- (k_p + nk_{Rp})V_{\bar{E}}A_Z^{-1}CC^\top A_Z^{-\top} \\
    & = -(\frac{k_p}{2} + nk_{Rp}) V_{\bar{E}}A_Z^{-1}CC^\top A_Z^{-\top}.
\end{align*}
Hence, the translation error term in the Lyapunov function derivative is
\begin{align*}
    2\langle V_{\bar{E}}, \dot{V}_{\bar{E}} \rangle 
    &= -(k_p + 2nk_{Rp})\langle V_{\bar{E}}, V_{\bar{E}} A_Z^{-1}CC^\top A_Z^{-\top} \rangle \\
    &= -(k_p + 2nk_{Rp}) |V_{\bar{E}} A_Z^{-1}C|^2.
\end{align*}
For simplifying the correction term $\Omega_\Delta^p$, we compute
\begin{align*}
    &\hat{R}(Y-\hat{Y})\mathbf{1}_n 
    \\&=  -R_{\bar{E}}^\top (V-R_{\bar{E}}\hat{V})C \mathbf{1}_n\\
    & = -R_{\bar{E}}^\top (VA_Z - R_{\bar{E}}\hat{V}A_Z)A_Z^{-1}C\mathbf{1}_n \\
    &= -R_{\bar{E}}^\top V_{\bar{E}}A_Z^{-1}C\mathbf{1}_n - (R_{\bar{E}}^\top - I)V_ZA_Z^{-1}C\mathbf{1}_n.
\end{align*}
Now, the correction term $\Omega_\Delta^p$ can be rewritten as
\begin{align*}
    \Omega_\Delta^p =& -4k_{Rp}(V_ZA_ZC\mathbf{1}_n)^\times (R_{\bar{E}}^\top V_{\bar{E}}A_Z^{-1}C\mathbf{1}_n) \\ \quad &- 4k_{Rp} (V_ZA_ZC\mathbf{1}_n)^\times (R_{\bar{E}}^\top V_ZA_Z^{-1}C\mathbf{1}_n).
\end{align*}
The rotational part of the Lyapunov function derivative is therefore given by
\begin{align*}
    & tr(R_{\bar{E}}\Omega_\Delta^\times) \\
    &= 4k_{Rp} tr(R_{\bar{E}}((R_{\bar{E}}^\top V_ZA_Z^{-1}C\mathbf{1}_n)^\times (V_ZA_Z^{-1}C\mathbf{1}_n))^\times) \\
    &\qquad +4k_{Rp} tr(R_{\bar{E}}((R_{\bar{E}}^\top V_{\bar{E}}A_Z^{-1}C\mathbf{1}_n)^\times (V_ZA_Z^{-1}C\mathbf{1}_n) )^\times) \\
    &= -2k_{Rp} |(I- R_{\bar{E}}^2)V_ZA_Z^{-1}C \mathbf{1}_n|^2 \\& \quad -4k_{Rp}\langle V_{\bar{E}}A_Z^{-1}C\mathbf{1}_n, (I- R_{\bar{E}}^2)V_ZA_Z^{-1}C \mathbf{1}_n \rangle
\end{align*}
Finally, the derivative of the Lyapunov function satisfies
\begin{align*}
    \dot{\mathcal{L}}
    &= -(k_p + 2nk_{Rp}) |V_{\bar{E}} A_Z^{-1}C|^2 
    \\ & \qquad
    -2k_{Rp} |(I- R_{\bar{E}}^2)V_ZA_Z^{-1}C \mathbf{1}_n|^2 
    \\ & \qquad
    -4k_{Rp}\langle V_{\bar{E}}A_Z^{-1}C\mathbf{1}_n, (I- R_{\bar{E}}^2)V_ZA_Z^{-1}C \mathbf{1}_n \rangle \\
    & \le -(k_p + 2nk_{Rp}) |V_{\bar{E}} A_Z^{-1}C|^2 
    \\ & \qquad
    -2k_{Rp} |(I- R_{\bar{E}}^2)V_ZA_Z^{-1}C \mathbf{1}_n|^2 
    \\ & \qquad
    + 4k_{Rp} |V_{\bar{E}}A_Z^{-1}C\mathbf{1}_n|| (I- R_{\bar{E}}^2)V_ZA_Z^{-1}C \mathbf{1}_n | \\
    & \le -(k_p + 2nk_{Rp}) |V_{\bar{E}} A_Z^{-1}C|^2 
    \\ & \qquad
    -2k_{Rp} |(I- R_{\bar{E}}^2)V_ZA_Z^{-1}C \mathbf{1}_n|^2 
    \\ & \qquad
    + 4k_{Rp}\sqrt{n} |V_{\bar{E}}A_Z^{-1}C|| (I- R_{\bar{E}}^2)V_ZA_Z^{-1}C \mathbf{1}_n | \\
    & \le -k_p|V_{\bar{E}}A_Z^{-1}C|^2 
    \\ & \qquad- 2k_{Rp}(\sqrt{n} |V_{\bar{E}}A_Z^{-1}C| - | (I- R_{\bar{E}}^2)V_ZA_Z^{-1}C \mathbf{1}_n |)^2.
\end{align*}
Hence, $\dot{\mathcal{L}}$ is negative semi-definite as stated.

\section{Proof of Boundedness of $P$}\label{app:proof_P_bounded}

We will show here that the matrix $P$ defined in Theorem~\ref{thm:main_result} has lower and upper bounded eigenvalues.
Let 
\begin{align*}
    P = \begin{pmatrix}
        s_v & s_{vx} & S_{pv}^\top \\
        s_{vx} & s_{x} & S_{px}^\top \\
        S_{pv} & S_{px} & S_p
    \end{pmatrix} \in \mathbb{R}^{(2+n)\times (2+n)},
\end{align*}
where $s_v,s_{vx}, s_x \in \mathbb{R}$, $S_{pv}, S_{px} \in \mathbb{R}^n$, and $S_p \in \mathbb{R}^{n\times n}$.
The dynamics of all the terms in $P$ can be written as 
\begin{align*}
    \dot{s}_v &= - 2s_{vx} - 2qs_v ,& \dot{S}_{pv} &= -S_{px} - 2q S_{pv},\\
    \dot{s}_{vx} &= -s_x - 2qs_{vx}, & \dot{S}_{px} &= -k_p \mathbf{1}_n - 2q S_{px},\\
    \dot{s}_x &= (k_x\sigma + k_p n) - 2q s_x, & \dot{S}_{p} &= k_p I_n - 2q S_{p}.
\end{align*}

We first find upper and lower bounds on $s_x, s_v, s_{vx}$.
The lower bound on $s_x$ is given by
    \begin{align}
    s_x(t) &= e^{-2qt}s_x(0) + \frac{nk_p}{2q}(1-e^{-2qt}) \notag \\&\qquad 
    + k_x\int_0^t\sigma(s)e^{-2q(t-s)}ds \notag \\
    &\ge \frac{nk_p}{2q} + k_x \int_{t-T}^t \sigma(s) e^{-2q(t-s)}ds \notag \\
    &\ge \frac{nk_p}{2q} + k_x e^{-2qT}\int_{t-T}^t \sigma(s) ds \notag \\
    &\ge \frac{nk_p}{2q} + k_x e^{-2qT}\tau, \label{eq:use_tpe}
\end{align}
where \eqref{eq:use_tpe} follows from TPE of $\sigma$.
Also, $s_x(t)$ is upper-bounded when $\sigma(t) \equiv 1$, that is
\begin{align*}
    \frac{nk_p}{2q} +\delta \le s_x(t) \le \frac{nk_p+k_x}{2q},
\end{align*}
where $\delta := k_x e^{-2qT}\tau$.
Consequently, upper and lower bounds can be derived for $s_{vx}$ and $s_v$,
\begin{align*}
    -\frac{nk_p+k_x}{4q^2} \le &s_{vx}(t) \le -\frac{nk_p}{4q^2} -\frac{\delta}{2q}, \\
    \frac{nk_p}{4q^3} +\frac{\delta}{2q^2}\le &s_v(t) \le \frac{nk_p+k_x}{4q^3}.
\end{align*}

The terms $S_{p}, S_{px}$ and $S_{pv}$ are all stable and time-invariant linear ODEs that have steady solutions
\begin{align*}
    S_p = \frac{k_p}{2q} I_n, &&
    S_{px} = -\frac{k_p}{2q} \mathbf{1}_n, &&
    S_{pv} = \frac{k_p}{4q^2}\mathbf{1}_n.
\end{align*}
These are exactly the choices made in the statement of Theorem~\ref{thm:main_result}.
Since the entries of $P$ are all bounded above, it must be that the eigenvalues of $P$ are also bounded above.

To see that the eigenvalues of $P$ are bounded below, it now suffices to show that the determinant of $P$ is bounded below also.
We write  $P$ takes in block matrix form as
\begin{align*}
    &P = \begin{pmatrix}
        A_p & B_p\\ B_p^\top & C_p
    \end{pmatrix}, \quad \text{where}\\
    &A_p = \begin{pmatrix}
        s_v & s_{vx} \\s_{vx} & s_{x}
    \end{pmatrix}, \ B_p = \begin{pmatrix}
        \frac{k_p}{4q^2}\mathbf{1}_n^\top \\ -\frac{k_p}{2q} \mathbf{1}_n^\top
    \end{pmatrix},\ C_p = \frac{k_p}{2q} I_n.
\end{align*}
Since the determinant of $C_p$ is a fixed constant, then the determinant of $P$ is bounded below if and only if the determinant of its Schur complement $P/C_p = A_p - B_pC_p^{-1}B_p^\top$ is bounded below.
We have that
\begin{align*}
    P/C_p = \begin{pmatrix}
        s_v - \frac{nk_p}{8q^3} & s_{vx}+\frac{nk_p}{4q^2} \\s_{vx} +\frac{nk_p}{4q^2} & s_{x} - \frac{nk_p}{2q}
    \end{pmatrix},
\end{align*}
and thus the determinant of $P/C_P$ is 
\begin{align*}
    \det{(P/C_P)} &= \left(s_v - \frac{nk_p}{8q^3} \right)\left(s_x - \frac{nk_p}{2q} \right) - \left(s_{vx} - \frac{nk_p}{4q^2} \right)^2 \\
    &\ge \delta \left (\frac{nk_p}{8q^3} + \frac{\delta}{2q^2} \right) - \frac{k_x^2}{16q^4} \\
    & = \frac{2nk_pq\delta + 8q^2\delta^2 - k_x^2}{16q^4} \\
    &= \frac{k_x(2n\tau q e^{-2qT}k_p + (8q^2\tau^2 e^{-4qT}-1)k_x)}{16q^4}.
\end{align*}
Since this term is constant and designed to be greater than zero by our selection of gains in Theorem~\ref{thm:main_result}, this completes the proof that the eigenvalues of $P$ are also bounded below.

\section{Dynamics of $V_Z$}\label{app:V_z_dynamics}
Direct computation of the dynamics of $V_Z$ yields
\begin{align*}
    \dot{V}_Z &=  W_G A_Z - W_\Gamma - V_Z S_\Gamma \\
    &= W_G A_Z + (k_x + k_{Rx})(y_x - \sigma V_ZA_Z^{-1}C_x)C_x^\top A_Z^{-\top}
    \\ &\hspace{1cm}
    - (k_p + nk_{Rp})V_ZA_Z^{-1}CC^\top A_Z^{-\top} 
    \\ &\hspace{1cm}
    +\frac{k_x\sigma}{2} V_ZA_Z^{-1}C_xC_x^\top A_Z^{-\top}
    \\ &\hspace{1cm}
    +\frac{k_p}{2} V_ZA_Z^{-1}CC^\top A_Z^{-\top} - qV_Z \\
    &= -V_Z \bigg ( \sigma \left(\frac{k_x}{2} + k_{Rx}\right) A_Z^{-1}C_xC_x^\top A_Z^{-\top}
    \\ &\hspace{1cm}
    + \left(\frac{k_p}{2} + n k_{Rp}\right) A_Z^{-1}CC^\top A_Z^{-\top} + qI \bigg )
    \\ &\hspace{1cm}
    +W_GA_Z + (k_x + k_{Rx})y_x C_x^\top A_Z^{-\top} \\
    &= -V_Z M + B,
\end{align*}
where 
\begin{align*}
    M &= \bigg ( \sigma \left(\frac{k_x}{2} + k_{Rx}\right) A_Z^{-1}C_xC_x^\top A_Z^{-\top} 
    \\ &\hspace{1cm}
    + \left(\frac{k_p}{2} + n k_{Rp}\right) A_Z^{-1}CC^\top A_Z^{-\top} + qI \bigg ), \\
    B &= W_GA_Z + (k_x + k_{Rx})y_x C_x^\top A_Z^{-\top}.
\end{align*}
\end{document}